\newtheorem{theorem}{Theorem}[section]
\newtheorem{lemma}[theorem]{Lemma} 
\newtheorem{proposition}[theorem]{Proposition}
\newcommand{\PP}{{\mathbb P}}
\newcommand{\EE}{{\mathbb E}}
\newcommand{\lam}{\lambda}
\begin{document}
\title[Majority rule on Yule trees]{Majority rule has transition ratio 4 on Yule trees under a 2-state symmetric model}
\author{Elchanan Mossel and Mike Steel}
\address{EM: Department of Statistics, UC Berkeley, Berkeley CA; MS Biomathematics Research Centre, University of Canterbury, Christchurch, New Zealand}
\thanks{}

\date{\today}

\begin{abstract}
Inferring the ancestral state at the root of a phylogenetic tree from states observed at the leaves is a problem arising in evolutionary biology. 
The simplest technique -- majority rule -- estimates the root state by the most frequently occurring state at the leaves.  Alternative methods -- such as maximum parsimony - explicitly
take the tree structure into account.  Since either method can outperform the other on particular trees, it is useful to consider the accuracy of the methods on trees
generated under some evolutionary null model, such as a Yule pure-birth model. In this short note, we answer a recently posed question concerning the performance of majority rule on Yule trees under a symmetric 2-state Markovian substitution model of character state change. We show that majority rule is accurate precisely when the ratio of the birth (speciation) rate of the Yule process to the substitution rate exceeds the value $4$.  By contrast, maximum parsimony has been shown to be accurate only when this ratio is at least 6. Our proof relies on a second moment calculation, coupling, and a novel application of a reflection principle. 

\end{abstract}

\maketitle
\section{Introduction}

Given a binary tree, $T$, suppose that a state from some set $S$ is 
assigned uniformly at random to  the root of $T$.
This state then evolves down the tree to the tips of the tree according to a continuous-time Markovian process on $S$, acting along the edges of the tree.  Given the states at the tips of a tree, {\em ancestral state reconstruction}  aims to  estimate the state that was present at the root of the tree.  This problem is particularly relevant to certain questions in evolutionary biology \cite{lib, roy}.

The performance of any ancestral state reconstruction  methods depends on the underlying tree (its topology and branch lengths); accordingly, to compare the performance of different ancestral state reconstruction methods, it is helpful to sample trees from some underlying null distribution.  In evolutionary biology, a natural and widely-used null  process is the 
Yule pure-birth model \cite{sta, yul},  starting with a single lineage at time 0 and grown for time $t$ with birth rate $\lambda$, and this is the model we study here. 
Moreover, for the rest of this paper, we will consider the simple continuous-time Markov process, on the state space $S=\{+1, -1\}$ with an instantaneous substitution rate $m$ between the two states.   Notice that there are two random processes at play here -- the generation of the tree and the substitution process that then applies along the edges of this tree.

 A straightforward information-theoretic argument shows  that any method for estimating the root state at a Yule tree cannot achieve an accuracy that is strictly bounded above $1/2$ as $t$ grows, when $\lambda< 4m$, even when the tree and its branch lengths are given \cite{gas2}.   If just the tree topology is known (but not necessarily its branch lengths) then a natural and often used ancestral state reconstruction approach is to assign a root state that  minimises the number of state changes in the tree required to explain the states at the leaves. This method is known as {\em maximum parsimony} and
it was shown in \cite{gas, li} that when $\lambda/m <6$, this method does no better than guessing the root state, as $t \rightarrow \infty$ (for $\lambda/m>6$, the probability of correct reconstruction (as $t \rightarrow \infty$)  is strictly greater than $1/2$ and converges to $1$ as $\lambda/m \rightarrow \infty$).  The difference between these two ancestral state reconstruction methods is illustrated in Fig.~\ref{fig1}.

\begin{figure}[htb]
\centering
\includegraphics[scale=1.0]{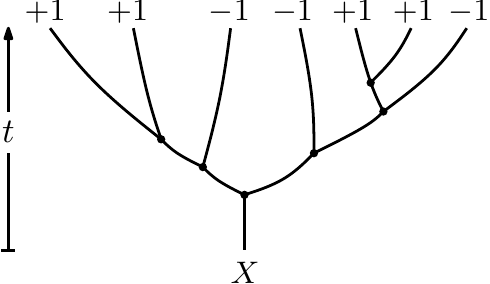}
\caption{A tree generated under a Yule process for time $t$ with seven leaves.  For the states at the leaves shown, majority rule will assign state $+1$ to the unknown root state $X =\pm 1$,
while maximum parsimony will assign state $-1$ to $X$ (since $X=-1$ requires just two state changes in the tree, while $X=+1$ requires  at least three).}
\label{fig1}
\end{figure}  

There is an even simpler way to estimate the root state from the leaf states, which does not even require us to know the tree topology. This is to simply count the number of leaves in each state and use a majority state as the estimate (ties are broken randomly).   For this {\em majority rule} method, the question of determining the ratio of $\lambda/m$ at which
root state estimation retains some accuracy as $t \rightarrow \infty$ was posed in  \cite{gas2}.  In this note, we show that this transition occurs for majority rule at $\lambda/m =4$, which is therefore the smallest possible ratio.  In particular, there is a range ($4< \frac{\lambda}{m}<6$) within which simple majority rule will outperform a recursive method that explicitly uses the tree topology information (maximum parsimony),  despite the fact that for some trees, maximum parsimony can have higher accuracy than majority rule \cite{gas2}.  Our findings are  consistent with simulations that have suggested that majority rule tends to have higher overall accuracy for ancestral state reconstruction on Yule trees than maximum parsimony \cite{gas2}, and complement a  recent study of ancestral state reconstruction on Yule trees for continuous characters evolving under
an Ornstein--Uhlenbeck process \cite{bar}.

It is interesting to compare our results to results on census reconstruction from~\cite{mp}. 
Theorem 1.4 in~\cite{mp} implies that when $\lambda > 4m$, then the reconstruction problem is {\em census solvable}. This means that there is a linear estimator $\sum a_v \sigma_v$ of the root in terms of the leaves $\sigma_v$ which is correlated with the root of the tree. The coefficients of this linear estimator depend on the topology and edge lengths of the tree. In contrast, we are interested in the simpler estimator which is simply given by the majority of the leaf values and show that it results in correlated reconstruction for $\lambda > 4m$. Interestingly, our proof shows that for the Yule tree, the majority reconstruction estimator maximizes the reconstruction probability among all reconstruction methods which are functions of the number of $+1$ and $-1$ leaves only.

\subsection{Preliminaries}

First recall that under the symmetric 2-state process, if the initial state is +1, the state $\sigma_t \in S$ after time $t$ is the random variable  with distribution:
$$\sigma_t= 
\begin{cases}
+1, & \mbox{  with probability } \frac{1}{2}(1+e^{-2mt});\\
-1, &  \mbox{  with probability  }  \frac{1}{2}(1-e^{-2mt}).
\end{cases}
$$
Notice that:
\begin{equation}
\label{ee}
\EE[\sigma_t] = e^{-2mt}.
\end{equation}
Let $L_t$  be the set of leaves at time $t$. It is well known that $N_t:=|L_t|$ has a geometric distribution with parameter $p=e^{-\lambda t}$ (and so $N_te^{-\lambda t}$ converges in distribution to an exponential distribution with mean 1).   In particular, we have:
\begin{equation}
\label{leq}
 \EE[N_t] = e^{\lambda t}.
\end{equation}
Let $$S_t = \sum_{i \in L_t} \sigma_t(i),$$
where $\sigma_t(i)$ is the state at leaf $i$ on the resulting Yule tree, conditional on the root of the tree being in state $+1$.
We first compute the first moment of $S_t$.  Eqn.~(\ref{ee}) gives $\EE[S_t] = \EE[\EE[S_t|N_t]] = \EE[N_t \cdot e^{-2mt}]$
from which Eqn. (\ref{leq}) gives:
\begin{equation}
\label{ee2}
\EE[S_t] = e^{(\lambda -2m)t}.
\end{equation}

\section{Second moment calculation}
Calculating the  second moment of $S_t$ requires a little more care.
First, observe that we may write:  $$S_t^2 = \sum_{i\in L_t} \sigma_t(i)^2 + \sum_{(i,j) \in \tilde{L_t}}\sigma_t(i)\sigma_t(j),$$
where $\tilde{L_t} = \{(i,j) \in L_t \times L_t: i \neq j.\}$
Consequently, since $\sum_{i \in L_t} \sigma_t(i)^2 =N_t$, we have:  
\begin{equation}
\label{sum}
\EE[S_t^2] = e^{\lambda t} + F(t),
\end{equation}
where $F(t) = \EE[\tilde{S_t}]$
for $\tilde{S}_t=\sum_{(i,j) \in \tilde{L_t}}\sigma_t(i)\sigma_t(j).$

Now, suppose that, for the Yule tree grown for time $t$,  two leaves $i$ and $j$ have a most-recent common ancestor at time $t-t'$. Then conditional on this, 
\begin{equation}
\label{helpsout}
\EE[\sigma_t(i)\sigma_t(j)] = e^{-4mt'},
\end{equation}
where expectation is with respect to the substitution process alone.

The function $F(t)$ satisfies $F(0)=0$, and, by the nature of the Yule pure-birth process, and Eqn. (\ref{helpsout}), we have:
\begin{equation}
\label{central}
F(t+\delta) = (1 +2\lambda\delta + O(\delta^2)) \cdot (e^{-4m\delta}F(t)) + (\lambda \delta+ O(\delta^2)) (1-O(\delta)) \EE[N_t]
\end{equation}
Here the first of the two summands considers the extra contribution to $\tilde{S_t}$ between pairs of leaves in the additional $\delta$ period, including the additional contribution when one of the two leaves splits  into two lineages (the $2\lambda \delta$ term -- the probability that both split is $O(\delta^2)$); the second summand deals with the contribution made by the children of any leaf that splits in the $\delta$ period.  

Now, $e^{-4m\delta} = 1-4 m\delta + O(\delta^2)$, so if we apply this, along with  Eqn.(\ref{leq}) in Eqn.~(\ref{central}), and collect together all terms of quadratic or higher order in $\delta$,
we obtain:
$$F(t + \delta) = (1-(4m - 2\lambda)\delta)F(t) + \lambda\delta e^{\lambda t} + O(\delta^2).$$
Rearranging this, and letting  $\delta \rightarrow 0$, we obtain the following linear differential equation for $F(t)$:
\begin{equation}
\label{deq}
\frac{dF}{dt} + 2(2m-\lambda)F = \lambda e^{\lambda t}.
\end{equation}
Solving for $F$ is standard (using the integrating factor $I(s) = e^{(4m-2\lambda) s}$
and the  initial condition $F(0)=0$ gives
$F(t) = e^{(2\lambda - 4m)t} \int_0^t \lambda e^{-( \lambda-4m)s} ds$) and so:
\begin{equation}
\label{deq2}
F(t) =  e^{(2\lambda -4m)t} \times \frac{\lambda}{\lambda - 4m} (1-e^{-(\lambda -4m)t }).
\end{equation}

This, and Eqn. (\ref{ee2}) leads to the following result:

\begin{proposition}
\label{helpsprop}
$\EE[S_t^2] = e^{\lam t} +F(t),$ where $F(t)$ is given by (\ref{deq2}). In particular, when 
$\lambda > 4m$, then for all $t \geq 0$:
$$\frac{\EE[S_t^2]}{\EE[S_t]^2}= e^{-rt}  + \frac{1}{(1-4m/\lambda)}(1 - e^{-rt}),$$
where  $r=\lambda - 4m>0$.

\end{proposition}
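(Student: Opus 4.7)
The first assertion, $\EE[S_t^2] = e^{\lambda t} + F(t)$, is already essentially in hand: it is precisely equation (\ref{sum}), and the explicit form of $F$ follows from solving the linear ODE (\ref{deq}) with initial condition $F(0)=0$, which was done just above the proposition to obtain (\ref{deq2}). So for this part I would simply cite (\ref{sum}) together with (\ref{deq2}) and move on.

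The substantive content of the proposition is the closed form for the ratio $\EE[S_t^2]/\EE[S_t]^2$ when $\lambda > 4m$. My plan is to divide the expression $\EE[S_t^2] = e^{\lambda t} + F(t)$ by $\EE[S_t]^2$, which by (\ref{ee2}) equals $e^{2(\lambda-2m)t} = e^{(2\lambda-4m)t}$. Splitting the quotient into its two natural summands, the first contributes
\[
\frac{e^{\lambda t}}{e^{(2\lambda-4m)t}} = e^{-(\lambda-4m)t} = e^{-rt},
\]
using $r = \lambda - 4m$. The second contributes
\[
\frac{F(t)}{\EE[S_t]^2} = \frac{e^{(2\lambda-4m)t}}{e^{(2\lambda-4m)t}} \cdot \frac{\lambda}{\lambda-4m}\bigl(1 - e^{-(\lambda-4m)t}\bigr) = \frac{1}{1-4m/\lambda}\bigl(1 - e^{-rt}\bigr),
\]
after dividing numerator and denominator of $\lambda/(\lambda-4m)$ by $\lambda$. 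Summing the two contributions gives the stated identity.

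The hypothesis $\lambda > 4m$ enters only to ensure $r > 0$ and that $\lambda - 4m$ in the denominator of (\ref{deq2}) is nonzero; all algebraic manipulations above are valid as identities between analytic functions of $t$ for any $\lambda \neq 4m$. So there is no real obstacle here — the argument is routine bookkeeping built on the moment calculations (\ref{ee2}) and (\ref{deq2}) already established. The only point requiring mild attention is making sure the exponents in $\EE[S_t]^2$ and $F(t)$ are correctly combined, since the $e^{(2\lambda-4m)t}$ prefactor in $F(t)$ cancels exactly against $\EE[S_t]^2$, which is what makes the second summand of the ratio bounded as $t \to \infty$.
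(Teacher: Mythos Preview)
Your proposal is correct and matches the paper's own argument: the proposition is stated as an immediate consequence of (\ref{sum}), (\ref{deq2}), and (\ref{ee2}), and your explicit division by $\EE[S_t]^2 = e^{(2\lambda-4m)t}$ to obtain the two summands $e^{-rt}$ and $\frac{1}{1-4m/\lambda}(1-e^{-rt})$ is exactly the routine bookkeeping the paper leaves implicit.
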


We note that exactly the same proof can be applied to $S_{t,\pm}$, which is $S_t$ conditioned on the root being $\pm 1$ 
We will use this to establish our desired result.

\section{A lower bound on the total variation distance of $S_{t,-}$ and $S_{t,+}$}
Out next goal is to show the following.
\begin{lemma} 
\label{lemhelps}
Provided  $\lambda > 4m$,  then for $r=\lambda - 4m>0$ :
\[
d_{TV}(S_{t,+},S_{t,-}) \geq \frac{1-4m/\lambda}{1-4me^{-rt}/\lambda},
\]
and the expression on the right is a monotone decreasing function of $t$, from $1$ (at $t=0$) to $1-\frac{4m}{\lambda}$ (as $t \rightarrow \infty$). 
\end{lemma}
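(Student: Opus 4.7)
The plan is to lower bound $d_{TV}(S_{t,+}, S_{t,-})$ via a Cauchy--Schwarz argument applied to the signed measure $P-Q$, where $P$ and $Q$ denote the laws of $S_{t,+}$ and $S_{t,-}$ respectively. By the sign symmetry of the 2-state model, $S_{t,-}$ is distributed as $-S_{t,+}$, so $\EE_P[S]=\mu=e^{(\lam-2m)t}$ (from Eqn.~(\ref{ee2})), $\EE_Q[S]=-\mu$, and $\EE_P[S^2]=\EE_Q[S^2]=\EE[S_{t,+}^2]$ (with the latter computed in Proposition~\ref{helpsprop}).

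Set $\pi=(P+Q)/2$ and let $f=dP/d\pi$; then $f\in[0,2]$, $dQ=(2-f)\,d\pi$, and $d_{TV}(P,Q)=\int|f-1|\,d\pi$. I would then apply Cauchy--Schwarz to the pairing of the identity $s\mapsto s$ against $dP-dQ=(2f-2)\,d\pi$:
\[
(2\mu)^2=\left(\int s\,(2f-2)\,d\pi\right)^{\!2}\leq\int s^2\,d\pi\cdot\int(2f-2)^2\,d\pi=4\,\EE[S_{t,+}^2]\int(f-1)^2\,d\pi.
\]
The key observation is that since $f\in[0,2]$ forces $|f-1|\leq 1$, we have $(f-1)^2\leq|f-1|$ pointwise, and therefore $\int(f-1)^2\,d\pi\leq d_{TV}(P,Q)$. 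Combining these two displays yields the clean bound
\[
d_{TV}(S_{t,+},S_{t,-})\geq\frac{\mu^2}{\EE[S_{t,+}^2]}.
\]

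To finish, I would plug in the moment formulas from Proposition~\ref{helpsprop} and Eqn.~(\ref{ee2}) and simplify:
\[
\frac{\mu^2}{\EE[S_{t,+}^2]}=\frac{1}{e^{-rt}+\frac{1}{1-4m/\lam}(1-e^{-rt})}=\frac{1-4m/\lam}{1-(4m/\lam)\,e^{-rt}},
\]
which is the stated bound. Monotonicity and the two limit values then follow by a one-line differentiation: writing $\alpha=4m/\lam\in(0,1)$, the derivative of $(1-\alpha)/(1-\alpha e^{-rt})$ in $t$ equals $-(1-\alpha)\alpha r e^{-rt}/(1-\alpha e^{-rt})^2<0$, with value $1$ at $t=0$ and limit $1-4m/\lam$ as $t\to\infty$.

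The only non-routine step is spotting that the pointwise inequality $(f-1)^2\leq|f-1|$ converts the $L^2$ norm arising in Cauchy--Schwarz into the $L^1$ total-variation norm; this is what upgrades a second-moment inequality into a genuine $d_{TV}$ bound. Notably the argument sketched above uses neither coupling nor a reflection principle, so I would expect those ingredients alluded to in the abstract to appear elsewhere in the paper---likely to establish that the event $\{S>0\}$ is actually an optimal test realizing $d_{TV}(S_{t,+},S_{t,-})$ (a monotone likelihood ratio / reflection claim on the distribution of $S_{t,+}$), which is what justifies translating Lemma~\ref{lemhelps} into a statement about majority rule's accuracy.
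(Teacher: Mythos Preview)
Your proof is correct and lands on exactly the same core inequality as the paper, namely
\[
d_{TV}(S_{t,+},S_{t,-})\ \geq\ \frac{\EE[S_{t,+}]^2}{\EE[S_{t,+}^2]},
\]
after which the computation from Proposition~\ref{helpsprop} and the monotonicity check are identical. The route, however, is genuinely different. The paper argues via the coupling characterisation~(\ref{eq:TVdef2}): for an arbitrary coupling $(X,Y)$ of $S_{t,+}$ and $S_{t,-}$ it applies the Paley--Zygmund inequality to $|X-Y|$ to obtain $\PP[X\neq Y]\geq (\EE|X-Y|)^2/\EE[(X-Y)^2]$, then bounds the numerator below by $(\EE X-\EE Y)^2$ via Jensen and the denominator above by $2\EE X^2+2\EE Y^2$. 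You bypass coupling altogether, working directly with the density $f=dP/d\pi$ under $\pi=(P+Q)/2$ and pairing the identity function against $P-Q$ via Cauchy--Schwarz; the pointwise bound $(f-1)^2\leq|f-1|$ (valid because $f\in[0,2]$) is the neat device that converts the resulting $L^2$ quantity into the $L^1$ total-variation norm. Both proofs are short, both ultimately use the symmetry $S_{t,-}\sim -S_{t,+}$ only to equate the two second moments, and neither dominates the other in generality.

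One small correction to your closing remark: the ``coupling'' advertised in the abstract is in fact deployed precisely \emph{here}, in the paper's proof of this lemma, not in a later section. Your guess about the reflection principle is accurate, though --- it appears exactly where you anticipate, to show that $\{S>0\}$ is the optimal test so that majority rule attains $\tfrac12+\tfrac12\,d_{TV}(S_{t,+},S_{t,-})$.
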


\begin{proof}
We first recall that the total variation distance between any two random variables $X,Y$ on the same probability space $\Omega$ is defined by:
\begin{equation} \label{eq:TVdef1} 
d_{TV}(X,Y) := \frac{1}{2} \sum_{\omega \in \Omega} |\PP[X = \omega]  - \PP[Y = \omega]|.
\end{equation} 
A dual definition, which will be used in the proof of the lemma, is given by: 
\begin{equation} \label{eq:TVdef2} 
d_{TV}(X,Y) := \inf \{ \PP[X' \neq Y'] : X' \sim X, Y' \sim Y \},
\end{equation}
where the infimum is taken over all random variables on $\Omega^2$ with $X'$ having the same distribution as $X$, and with $Y'$ having the same distribution as $Y$ (such $(X',Y')$ is called a coupling 
of $X$ and $Y$).  

Let $(X,Y)$ be a coupling of $S_{t,+}$ and $S_{t,-}$. We will use~(\ref{eq:TVdef2}) to place a lower bound on
the total variation distance by providing a lower bound on $\PP[X \neq Y]$.  
This is a standard application of the second moment method (see e.g.~\cite{lev}, Proposition 7.8).
Using Paley--Zygmund's second moment inequality, one has:
$$\PP[X \neq Y] \geq \frac{(\EE [|X - Y|])^2}{\EE[(X - Y)^2]} .$$
By Jensen's inequality one has
\[
(\EE [|X - Y|])^2 \geq (\EE[X] - \EE[Y])^2 = 4\EE[S_{t,+}]^2.
\]
Moreover,
\[
\EE[(X - Y)^2]  \leq 2 \EE[X^2] + 2 \EE[Y^2] = 4 \times \frac{1}{2} (\EE[S_{t,+}^2] + \EE[S_{t,-}^2]) = 
4 \EE[S_{t,+}^2].
\]
Thus we  have proved that:
\[
\PP(X \neq Y) \geq \frac{\EE[S_{t,+}]^2}{\EE[S_{t,+}^2]},
\]
and Lemma~\ref{lemhelps} now follows from Proposition \ref{helpsprop} (noting that $S_t$ in that proposition is $S_{t,+}$). 
This completes the proof of the lemma.
\end{proof}

\section{Majority reconstruction} 
In order to complete the proof, we will establish the following lemma. 
\begin{lemma} 
\label{helps2}
For all $t \geq 0$, the probability $M$ that majority rule reconstructs the root state correctly  
is given by: 
\[
M = \frac{1}{2} + 
\frac{1}{2} d_{TV}(S_{t,+},S_{t,-}).
\]  
\end{lemma}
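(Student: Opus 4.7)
My plan is to reduce the identity to a Neyman--Pearson computation and then handle the remaining monotonicity claim via a reflection argument. By the $\pm$-symmetry of the substitution model, the probability of a correct majority call is the same conditional on either root state, and majority with fair tie-breaking gives
\[
M \;=\; \PP[\mathrm{predict}\ +\mid \mathrm{root}=+] \;=\; \PP[S_{t,+}>0] \;+\; \tfrac{1}{2}\PP[S_{t,+}=0].
\]
Combined with the distributional symmetry $S_{t,-}\stackrel{d}{=}-S_{t,+}$, this rearranges to $2M-1 = \PP[S_{t,+}>0] - \PP[S_{t,-}>0]$, so the dual formula $d_{TV}(\mu,\nu)=\sup_A(\mu(A)-\nu(A))$ applied with $A=\{s>0\}$ immediately gives the easy direction $2M-1 \le d_{TV}(S_{t,+},S_{t,-})$.

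For the reverse inequality I need $A=\{s>0\}$ to attain the supremum. By the Neyman--Pearson characterization of the optimal $A$ as $\{k:\PP[S_{t,+}=k]>\PP[S_{t,-}=k]\}$, together with the symmetry $\PP[S_{t,-}=k]=\PP[S_{t,+}=-k]$, this reduces to the one-sided monotonicity
\[
(*)\qquad \PP[S_{t,+}=k] \ \ge\ \PP[S_{t,+}=-k] \quad \text{for all integers }k\ge 0.
\]

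I expect $(*)$ to be the main obstacle, and this is where the ``reflection principle'' of the abstract should enter. My plan is to realize the tree-plus-substitution process as a Markov chain $(N_s^+,N_s^-)_{s\ge 0}$ on $\mathbb{Z}_{\ge 0}^2$ starting at $(1,0)$ given root $+$, with births at rate $\lambda N^\pm$ and substitutions $\pm\!\to\!\mp$ at rate $mN^\pm$, and to introduce $\tau=\inf\{s:N_s^+\le N_s^-\}$. On $\{\tau>t\}$ one has $S_t>0$, so such paths contribute only to $\PP[S_{t,+}=k]$ for $k>0$. On $\{\tau\le t,\,N_\tau^+=N_\tau^-\}$, the strong Markov property together with the $+\!\leftrightarrow\!-$ symmetry of the dynamics from a diagonal state makes the post-$\tau$ contributions to $\PP[S_{t,+}=k]$ and $\PP[S_{t,+}=-k]$ cancel exactly. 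The remaining case $\{\tau\le t,\,N_\tau^+<N_\tau^-\}$ can occur only via a substitution jumping $(n{+}1,n)\mapsto(n,n{+}1)$ across the diagonal; here I would construct a measure-preserving involution on sample paths that swaps the $+$ and $-$ labels after time $\tau$, pairing the $+k$ and $-k$ contributions so that they cancel. Summing the three cases yields $\PP[S_{t,+}=k]-\PP[S_{t,+}=-k]=\PP[S_{t,+}=k,\,\tau>t]\ge 0$, which is $(*)$.

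The hardest step is this last diagonal-skipping case: unlike in the classical reflection for a simple walk at $0$, the process $N_s^+ - N_s^-$ can skip zero in a single substitution jump of size $-2$, so the post-$\tau$ state is asymmetric and a naive reflection about the diagonal does not preserve the measure. Finding the correct pairing of paths that compensates for this skip is the novel ingredient I would need to supply.
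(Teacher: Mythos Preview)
Your reduction and reflection outline coincide with the paper's argument: the paper also rewrites $M$ as $\tfrac12+\tfrac14\sum_s(\PP[S=s\mid +]-\PP[S=s\mid -])\,\mathrm{sgn}(s)$ and then argues via a reflection principle that $\mathrm{sgn}(s)$ matches the sign of the difference, which is exactly your inequality $(*)$. The paper's reflection takes $T=\inf\{s:S_s=0\}$ and asserts that for $s>0$ one has $\PP[S_t=s\mid T>t,\ \sigma=-1]=0$, together with $\PP[S_t=s\mid T\le t,\ \sigma=+1]=\PP[S_t=s\mid T\le t,\ \sigma=-1]$ by the strong Markov property at $T$.

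You have in fact been \emph{more} careful than the paper on one point. A substitution takes $(n{+}1,n)\to(n,n{+}1)$, so $S_s$ can jump from $\pm1$ to $\mp1$ and skip $0$; hence the paper's assertion $\PP[S_t=s\mid T>t,\ \sigma=-1]=0$ is not literally true, and the diagonal-skipping case does need separate treatment. However, your proposed remedy --- a measure-preserving involution that swaps the $+/-$ labels on the post-$\tau$ trajectory --- does not work as stated either: if you swap strictly after $\tau$ the reflected path loses its jump at $\tau$ (the state is $(n{+}1,n)$ on both sides), while if you also swap the jump at $\tau$ the transition rate there changes from $m(n{+}1)$ to $mn$, so neither version is measure-preserving. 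This is exactly the difficulty you anticipated, and it is a genuine gap that neither your sketch nor the paper's proof closes. One way to fill it is a maximum-principle argument for the antisymmetric function $f(a,b,t)=\PP_{(a,b)}[S_t=k]-\PP_{(b,a)}[S_t=k]$ on $\{a\ge b\}$: at a boundary zero $f(n{+}1,n,t)=0$ the only off-region neighbour is $f(n,n{+}1,t)=-f(n{+}1,n,t)=0$, so the generator applied to $f$ is nonnegative there, which propagates $f\ge 0$ from $t=0$.
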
 

\begin{proof} 
Let $S = S_t$ and let $\sigma$ denote the root value. Then, by rewriting~(\ref{eq:TVdef1}),  we see that:
\begin{equation} \label{eq:D}
D := d_{TV}(S_{t,+},S_{t,-}) = \frac{1}{2} \sum_{s} |\PP[S = s | \sigma = +1]  - \PP[S = s | \sigma = -1]|.
\end{equation}
Moreover, the probability of reconstruction by majority rule is given by: 
\begin{equation} \label{eq:M1}
M = \sum_{s > 0} \PP[S = s] \PP[\sigma = +1 | S = s] +  \sum_{s < 0} \PP[S = s] \PP[\sigma = -1 | S = s] + 
\frac{1}{2} \PP[S = 0].
\end{equation}
Since $\PP[\sigma = +1 | S = s] + \PP[\sigma = -1 | S = s] = 1$, we can rewrite $\PP[\sigma = +1 | S = s]$ 
as $0.5 + 0.5(\PP[\sigma = +1 | S = s]-\PP[\sigma = -1 | S = s])$ and similarly for the other terms. 
We thus obtain the following from~(\ref{eq:M1}):
\begin{align} 
M &= \frac{1}{2} + \frac{1}{2} \sum_{s} \PP[S = s] (\PP[\sigma = +1 | S = s]  - \PP[\sigma = -1 | S = s]){\rm sgn}(s) \\ 
&= \frac{1}{2} + \frac{1}{4} \sum_{s} (\PP[S = s | \sigma = +1]  - \PP[S = s | \sigma = -1]) {\rm sgn}(s).
\label{eq:M2}
\end{align}
Comparing~(\ref{eq:M2}) and~(\ref{eq:D}), we see that in order to prove the lemma, it suffices to show that if $s > 0$ then $\PP[S_t = s | +1] > \PP[S_t = s | -1]$, 
while  if $s < 0$ then $\PP[S_t = s | +1] < \PP[S_t = s | -1]$.

The proof of this follows from the reflection principle. Consider the Markov chain $(N_t,S_t)$ where $N_t$ is the population size. 
Let $T$ be the first stopping time where $S_T = 0$ ($T = \infty$ if it does not happen). Then for $s > 0$, we have (where $\sigma$ is the root state)
\[
\PP[S_t = s | T > t, \sigma = +1] > 0, \quad \PP[S_t = s | T > t, \sigma = -1] = 0, \mbox{ and } 
\]
 \[
 \PP[S_t = s | T \leq t, \sigma = +1] = \PP[S_t = s | T \leq t, \sigma =-1].
\]
From this, it follows that:
\[
\PP[S_t = s | \sigma = +1] > \PP[S_t = s | \sigma = -1],
\]
as needed. The symmetric argument applies when $s < 0$. 

\end{proof}  

Recall that when $\lambda/m <4$ then $\lim_{t \rightarrow \infty} M_t = \frac{1}{2}$ (from \cite{gas2} or \cite{li}). 
We can now state our main result which describes what happens when  $\lambda/m >4$, and whose proof is immediate from Lemmas~\ref{lemhelps} and  \ref{helps2}.

\begin{theorem}
Let  $M_t$ denote the probability that majority rule correctly infers the root state for a Yule tree grown at speciation rate $\lambda$ for time $t$, and
with a character evolved on this tree under a 2-state symmetric process with transition rate $m$, where $\lambda/m>4$. 
Then for all $t\geq 0$: $$M_t  \geq  \frac{1}{2} +  \frac{1}{2}\left(\frac{1-4m/\lambda}{1-4me^{-rt}/\lambda} \right),$$
where the term on the right is monotone decreasing from 1 (at $t=0$) to $1-\frac{4m}{\lambda}$ (as $t \rightarrow \infty$).
In particular, for all finite $t\geq 0$: $$M_t  >  \frac{1}{2} +  \frac{1}{2}\left(1 - \frac{4m}{\lambda}\right).$$
\end{theorem}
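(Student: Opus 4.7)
The plan is to obtain the theorem as a direct consequence of Lemma~\ref{helps2} and Lemma~\ref{lemhelps}. First, I would invoke Lemma~\ref{helps2}, which gives the exact identity $M_t = \frac{1}{2} + \frac{1}{2}d_{TV}(S_{t,+},S_{t,-})$. This reduces the problem of bounding the reconstruction probability to one of bounding a total variation distance from below. Then I would substitute in the lower bound $d_{TV}(S_{t,+},S_{t,-}) \geq (1-4m/\lambda)/(1-4me^{-rt}/\lambda)$ supplied by Lemma~\ref{lemhelps}, which is applicable because the hypothesis $\lambda/m > 4$ ensures $r = \lambda - 4m > 0$. Plugging this into the identity yields the displayed inequality for $M_t$ immediately.

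The monotonicity claim about the right-hand side deserves a brief verification. Writing $a = 4m/\lambda \in (0,1)$, the bound is $g(t) := (1-a)/(1-ae^{-rt})$. Since $r > 0$, the denominator $1-ae^{-rt}$ is a monotone increasing function of $t$ on $[0,\infty)$, ranging from $1-a$ at $t=0$ up to $1$ as $t \to \infty$. Hence $g$ is monotone decreasing, from $g(0)=1$ down to $\lim_{t \to \infty} g(t) = 1-4m/\lambda$. The final ``in particular'' assertion then follows because for every finite $t \geq 0$ we have $g(t) > 1-4m/\lambda$, giving the strict inequality $M_t > \frac{1}{2} + \frac{1}{2}(1-4m/\lambda)$.

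Given that the two supporting lemmas are already in hand, the theorem itself is essentially a bookkeeping assembly and there is no real obstacle left. If I were approaching the overall problem from scratch, the genuine difficulty would lie in Lemma~\ref{helps2}: specifically, using the reflection-principle argument on the pair process $(N_t,S_t)$ to verify that for $s > 0$ one has $\PP[S_t = s | \sigma = +1] > \PP[S_t = s | \sigma = -1]$. This strict sign-matching is what guarantees that the total variation bound is actually \emph{attained} by the sign-based majority estimator, as opposed to merely being an upper bound on the accuracy of some unspecified reconstruction rule; without it, the clean identity $M_t = \frac{1}{2} + \frac{1}{2}d_{TV}$ would collapse to a one-sided inequality and the theorem would only give a weaker conclusion.
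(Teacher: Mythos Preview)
Your proposal is correct and matches the paper's approach exactly: the paper states that the theorem is immediate from Lemmas~\ref{lemhelps} and~\ref{helps2}, which is precisely the assembly you describe. Your explicit verification of the monotonicity of $g(t)$ is a nice addition, though the paper already records this monotonicity as part of the statement of Lemma~\ref{lemhelps}.
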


\section{Acknowledgements}
The authors thank the Simons Institute at UC Berkeley, where this work was carried out.
E.M would like to acknowledge the support of NSF (grant DMS 1106999  and CCF 1320105) and  ONR (DOD ONR grant N000141110140). 
M.S. would like to thank the NZ Marsden Fund and the Allan Wilson Centre for funding support.

\end{document}